\def\dfrac{\displaystyle\frac}
\newtheorem{thm}{Theorem}[section]%
\newtheorem{df}{Definition}[section]%
\newtheorem{lem}{Lemma}[section]%
\newtheorem{rem}{Remark}[section]%
\begin{document}
\title{Shipper collaboration matching: \\ fast enumeration of triangular transports \\ with high cooperation effects}
\author{Akifumi~Kira $^{\tt a, *}$, Nobuo Terajima $^{\tt b}$, \\ Yasuhiko Watanabe $^{\tt b}$,  
and Hirotaka Yamamoto $^{\tt c}$ \\
}
\date{\empty}
\maketitle
\vspace{-10mm}
\begin{center}
\small
\begin{tabular}{l}
$^{\tt a}$ Faculty of Informatics, Gunma University \\
\quad 4-2 Aramaki-machi, Maebashi, Gunma 371-8510, Japan \\
$^{\tt b}$ Japan Pallet Rental Corporation \\
\quad Ote Center Building, 1-1-3 Otemachi, Chiyoda-ku, Tokyo 100-0004, Japan \\
$^{\tt c}$ Thincess Co. Ltd. \\
\quad Saito Building \#3A, 3-1-4 Yotsuya, Shinjuku-ku, Tokyo 160-0004, Japan \\
$^{\tt *}$ Corresponding author, E-mail: a-kira@gunma-u.ac.jp
\end{tabular}
\end{center}
\begin{abstract}
The logistics industry in Japan is facing a severe shortage of labor. 
Therefore, there is an increasing need for joint transportation allowing large amounts of cargo to be transported using fewer trucks. 
In recent years, the use of artificial intelligence and other new technologies has gained wide attention for improving matching efficiency. 
However, it is difficult to develop a system 
that can instantly respond to requests because browsing through enormous combinations of 
two transport lanes is time consuming. 
In this study, we focus on a form of joint transportation called triangular transportation 
and enumerate the combinations with high cooperation effects. 
The proposed algorithm makes good use of hidden inequalities, such as the distance axiom, 
to narrow down the search range without sacrificing accuracy. 
Numerical experiments show that the proposed algorithm 
is thousands of times faster than simple brute force.
With this technology as the core engine, we developed a joint transportation matching system.
The system has already been in use by over 150 companies as of October 2022, 
and was featured in a collection of logistics digital transformation cases 
published by Japan's Ministry of Land, Infrastructure, Transport and Tourism. 
\end{abstract}
\textbf{Keywords} 
Enumeration, joint transport, triangular transport, occupied vehicle rate, 
pruning, cooperative game, social implementation.

\section{Introduction}
{\bf Background.} 
Owing to the severity of the recent logistics crisis and shortage of truck drivers in Japan, 
improving labor productivity has become an urgent issue. 
However, the load factor (truck fill rate) remains low, below 40\%~\cite{policy}. 
In other words, on average, trucks are loaded to only 40\% of their capacity. 
One reason for this is vacant return trips during long-haul transportation. 
Therefore, the Japan Pallet Rental Corporation (JPR) has provided significant support  toward joint transport 
by companies in other industries, as well as other measures to carry more cargo with fewer trucks. In October 2019, JPR recognized the importance of creating a system for deploying such initiatives throughout the logistics industry and 
began developing a common transportation matching system using artificial intelligence (AI) technology in collaboration with Gunma University. From October 2019 to March 2021, this development was funded by the New Energy and Industrial Technology Development Organization (NEDO)\footnote{
National Research and Development Agency under the jurisdiction of Japan's Ministry of Economy, Trade and Industry.
}
 \lq\lq The Project to Promote Data-Sharing in Collaborative Areas and Developing of AI Systems to Promote Connected Industries).''

\par 
{\bf Technical issues.} 
A series of transportation in which a single truck sequentially handles two transport lanes, 
such as from Tokyo to Osaka and from Osaka to Tokyo, so that the empty backhauls are reduced, 
is called \lq\lq round-trip transport." 
Furthermore, this is called "triangular transport" when it is expanded into three transport lanes, 
i.e., from Tokyo to Kanazawa, Kanazawa to Osaka, and Osaka to Tokyo, in the form of a triangle. 
The higher the occupied vehicle rate (i.e., the lower the empty running rate) is, 
the more efficient the process is (see Figure~\ref{fig:triangular}). 
In terms of negotiation efforts, 
round-trip transportation is ideal;   
however, there are no convenient partners with transport lanes in opposite directions, 
particularly for transportation between regional cities.  
In such cases, triangular transportation significantly increases the number of options available.
If Company A requests matching, the system searches for partners (Companies B and C) in its database and presents a list of efficient triangular transports. 
If 10,000 transport lanes are registered, there are approximately 100 million combinations of two transport lanes. 
Additionally, because it is necessary to calculate the distance traveled, which differs depending on the combination, a simple brute-force calculation would require a long time using a calculator. In reality, one logistics company has multiple transport lanes. 
Therefore, it is difficult for conventional methods to respond instantly to multiple matching requests from a single user or a series of requests from multiple users.
\begin{figure}[h]
\centering
\begin{tabular}{l}
$\longrightarrow$ : Loading Trip \\
$\dashrightarrow$\, : Empty Trip
\end{tabular}
\hspace{5mm}
$\textrm{Occupied Vehicle Rate} = \dfrac{\textrm{Length of} \longrightarrow}{\textrm{Length of} \longrightarrow + 
\dashrightarrow}$ \\[3mm]
\includegraphics[scale=0.14]{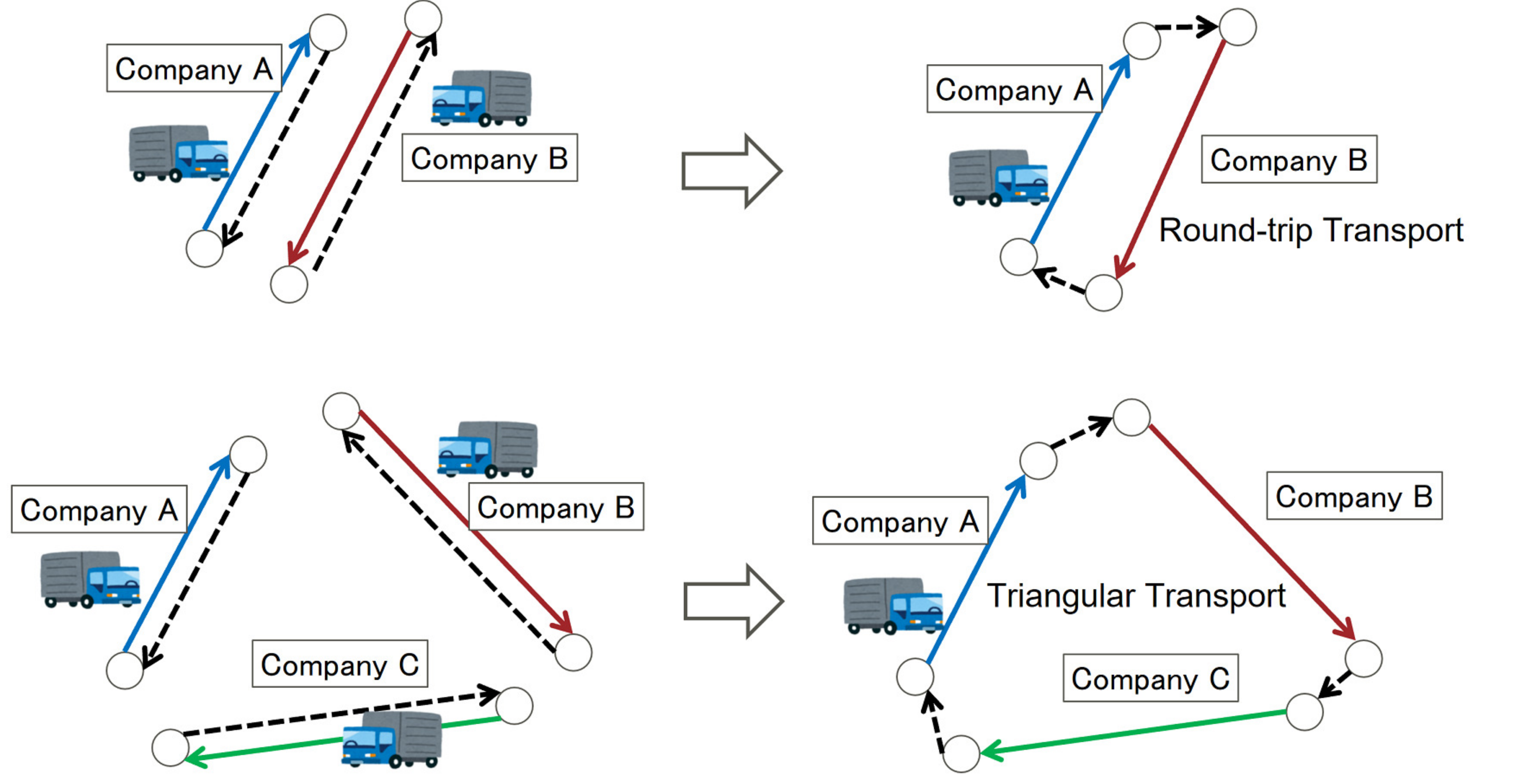}
\caption{Round-trip transportation and triangular transportation}
\label{fig:triangular}
\end{figure}
\par
{\bf Our contribution.} 
Suppose Company A requests a triangular transportation matching. Because, the first lane is that of Company A, we know the starting point, endpoint, and transportation distance of the first lane. The second and third lanes have not yet been finalized 
as we are searching for partners. Therefore, we take advantage of the underlying inequalities. 
Using distance axioms skillfully, we can back-calculate the conditions required to achieve the specified occupied vehicle rate. 
For instance, \lq\lq the vacant distance between the first and second lanes must be less than or equal to this value," and 
\lq\lq the distance of the second lane must be greater than or equal to this value." Therefore, the scope of the search is significantly reduced. By further modifying the data structure and traversal order, faster enumeration becomes possible
\footnote{Patent rights jointly filed by Gunma University and JPR on October 20, 2021~\cite{Kira2021}.}.
\par
{\bf Related Work.} 
Collaborative transportation has been extensively studied 
and related reviews by 
Cruijssen et al.~\cite{Cruijssen2007}, 
Verdonck et al.~\cite{Verdonck2013}, 
Guajardo and R\"{o}nnqvist~\cite{Guajardo2016}, 
Gansterer and Hart~\cite{Gansterer2018}, 
Pan et al.~\cite{Pan2019}, 
Karam et al.~\cite{Karam2021}, 
and 
Mrabti et al.~\cite{Mrabti2022} 
exist. 
The enumeration problems addressed in this study 
are related to the type of horizontal shipper collaborations that translates to 
(constrained variants of) the lane covering problem (LCP).  
Formally, LCP is a covering problem on a directed graph that
determines a set of simple cycles covering all lanes to mimimize the total travel cost, 
where lane denotes the arc between the pickup and delivery nodes 
of a full truckload (FTL) request from a shipper. 
While LCP can be solved in polynomial time 
(because it can be formulated as a minimum cost circulation problem),  
it becomes NP-hard 
when more realistic constraints on the cycles are cinsidered (Ergun et al.~\cite{ozlem-b}). 
Therefore, these studies have focussed on 
efficient approximate-solution methods for additional constraints such as  
cardinalities (Ergun et al.~\cite{ozlem-b}), 
time windows (Ergun et al.~\cite{ozlem-a}, Ghiani et al.~\cite{Ghiani2008}), and 
partner bounds (Kuyzu~\cite{Kuyzu2017}). 
However, even if each company specifies detailed matching conditions in advance and the system finds and presents a partner combination that mutually satisfies these conditions, 
the proposal is often not accepted because it violates some potential constraints 
known only to each company. 
Therefore, the authors are interested in enumerating coalitions (i.e., combinations of lanes) rather than in finding a partition of lanes (i.e., coalition structure generation). 
To the best of our knowledge, The closest study is that by Creemers et al.~\cite{Creemers2017}, who proposed the combined use of one-dimensional sort\footnote{lanes are sorted on the latitudinal coordinate only.} and a bounding-box approach to quickly identifying geographically close lanes with the intent of 
detecting collaborative shipping opportunities.
In contrust, 
we use a completely different approach and enumerate instantaneously 
the combinations of triangular transports whose occupied vehicle rate exceeds a given value.
%
\section{Triangular Transport Enumeration Problem} 
\label{Sec:formulation}
%

\subsection{Problem formulation} 
We are given a metric space $(B, d)$,  
where $B$ is a finite set of transportation bases and 
$d : B \!\times\! B \to \mathbb{R}_{\geq 0}$ is a distance function. 
For simplicity, a full truckload request from one base to another is called a transport lane (or lane),   
and a finite set of lanes $T$ is provided. 
For every lane $t \in T$, we denote the start point and the endpoint as $t^{\rm s}$ and $t^{\rm e}$, respectively.
Furthermore, for every lane $t \in T$, we denote the distance of $t$ by $t^{\rm d}$, where $t^{\rm d} = d(t^{\rm s}, t^{\rm e})$.
\par
A series of transportation in which a single truck sequentially handles three different lanes 
$t_1, t_2$, and $t_3$ in this order is called a \lq\lq triangular transport" and denoted by $(t_1, t_2, t_3)$.  
Given a triangular transport $(t_1, t_2, t_3)$, we use symbols $d_i$ and $e_i$, with $i = 1,2,3$, in the following manner:
\begin{align*}
d_i &= t_i^{\rm d}, \quad i = 1,2,3,  \\
e_1 &= d(t_{1}^{\rm e}, t_{2}^{\rm s}), \quad e_2 = d(t_{2}^{\rm e}, t_{3}^{\rm s}), \quad
e_3 = d(t_3^{\rm e}, t_{1}^{\rm s}).
\end{align*}
That is, $d_i$ represents the distance of the $i$th loading trip and $e_i$ represents the 
distance of $i$th empty trip (see Figure~\ref{fig:d_and_e}). 
\begin{figure}[H]
\centering
\includegraphics[scale=0.11]{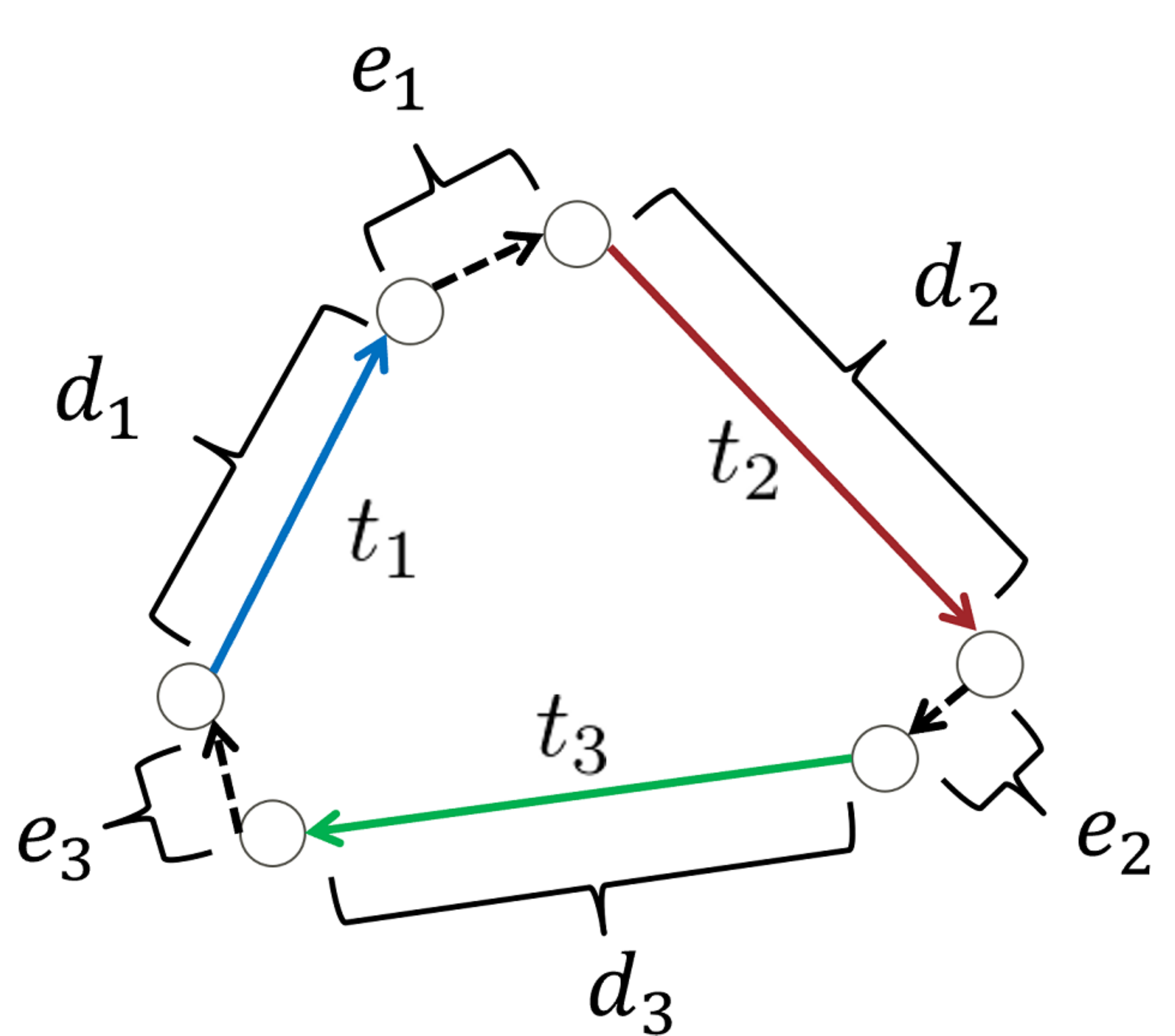} 
\caption{Symbols $d_i$ and $e_i$ representing distances}
\label{fig:d_and_e}
\end{figure}

Using the above notation, 
the occupied vehicle rate and the total mileage for the triangular transport can be expressed as follows:
\begin{align*}
& \textrm{Occupied vehicle rate} = \frac{d_1 + d_2 + d_3}{d_1 + e_1 + d_2 + e_2 + d_3 + e_3}, \\[2mm]
& \textrm{Total mileage} = d_1 + e_1 + d_2 + e_2 + d_3 + e_3. 
\end{align*}
For a given transport lane $t_1 \in T$, we search for joint transport partners $t_2, t_3 \in T$ 
and propose an effective triangular transport $(t_1, t_2, t_3)$ (see Figure 3). 
In particular, our goal is to list all the triangular transports that satisfy the following two conditions:
\begin{df}
For any  any $\ell \in (0, 1]$ and for any $u > 0$, a triangular transport is said to be $(\ell, u)$-feasible 
if it satisfies the following two constraints.
\begin{itemize}
\item[\rm (C1)] 
The occupied vehicle rate is $\ell$ or higher, Namely,
\begin{equation*}
\frac{d_1 + d_2 + d_3}{d_1 + e_1 + d_2 + e_2 + d_3 + e_3} \geqq \ell.
\end{equation*}
\item[\rm (C2)] 
The total mileage is $u$ or less. Namely,
\begin{equation*}
d_1 + e_1 + d_2 + e_2 + d_3 + e_3 \leqq u.
\end{equation*}
\end{itemize}
\end{df}

\begin{figure}[H]
\centering
\includegraphics[scale=0.15]{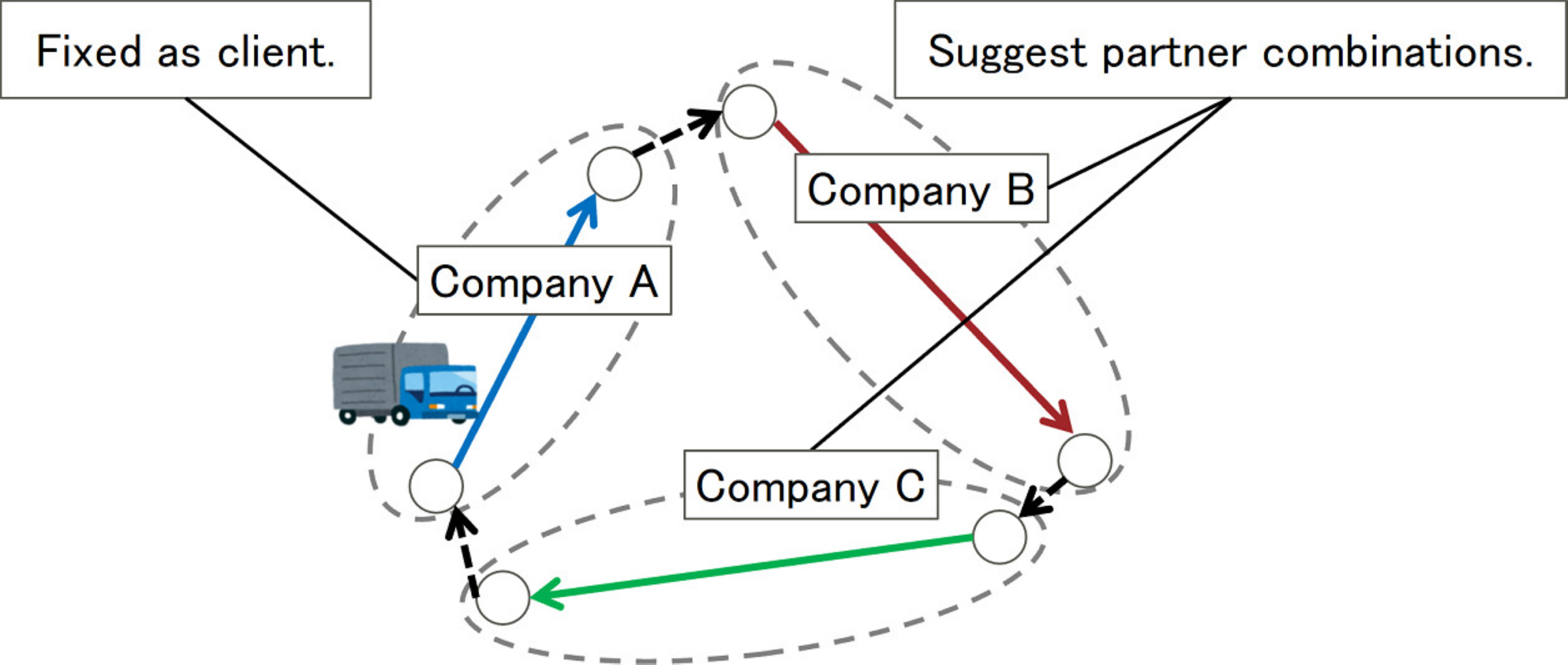} 
\caption{Processing a triangular transport matching request}
\label{fig:system}
\end{figure}

\begin{rem}
The three triangular transports $(t_1, t_2, t_3), (t_2, t_3, t_1)$, and $(t_3, t_1, t_2)$ 
obtained by cyclic permutation are essentially equivalent in the sense that they must result in the same occupied vehicle rate and the same total mileage.
Therefore, we can fix $t_1$ as a matching client without loss of generality. 
However, the cost of triangular transportation varies depending on the point of origin, and this difference is taken into account separately when estimating the costs. 
\end{rem}

In general, the higher the occupied vehicle rate is, the more efficient the process is. 
However, when $t_1$ is a short-haul lane from Tokyo to Saitama, selecting a long-haul lane from Saitama to Hokkaido as $t_2$ and a long-haul lane from Hokkaido to Tokyo as $t_3$ is not recommended, no matter how high the occupied vehicle rate is. In other words, 
only $t_2$ and $t_3$ should cooperate for round-trip transportation. The total mileage limit also serves to exclude such undesired combinations. 
\par
As the frequency of transportation and its seasonal variations must also be taken into consideration when selecting partners, it is convenient for the client to be presented with multiple candidates.
This is why we focus on enumeration rather than maximizing the occupied vehicle rate.
However, if the system presents too many candidates, 
the matching client cannot consider them deeply. Therefore, in practice, it is sufficient to enumerate and present $k$ combinations (for a suitable $k$) from those with the highest occupied vehicle rates. 
In this study, we also consider such the top-$k$ searches.

\subsection{Simple brute-force method and its drawbacks} 
For a given transport lane $t_1$, we only need to search for partners $t_2$ and $t_3$. 
Therefore we can consider a simple brute-force search with a double for-loop, 
as shown in Algorithm~\ref{Alg:simple}. 
However, if 10,000 lanes are registered in the database, 
 there are approximately 100 million combinations of 
 two lanes. As it is also necessary to calculate the distance traveled, 
 which differs according to the combination, a simple brute-force calculation using a calculator 
 would take a long time.  In reality, logistics companies have multiple transport lanes 
 (a single company may have hundreds or even thousands of lanes). 
 Thus, it is difficult for conventional methods to respond instantly to multiple matching requests 
 from a single user or a series of requests from multiple users.
\par
Hence, the remaining task discussed in this study is to provide a practically faster algorithm
for enumerating $(\ell, u)$-feasible triangular transports.

\begin{algorithm2e}[h]
\SetKwData{Input}{input}
\KwData{a set of lanes $T$ on a metric space $(B, d)$, a lane $t_1$, a desired occupied vehicle rate $\ell \in [0, 1]$, and an upper limit of mileage $u > 0$}
\KwResult{the set of all $(\ell, u)$-feasible triangular transports containing $t_1$}

$C \leftarrow \emptyset$\;
$d_1 \leftarrow t_1^{\rm d}$\;
\ForAll{$s \in S$}{
    $e_1 \leftarrow d(t_1^{\rm e}, s)$\;
    \ForAll{$t_2 \in T(s) \setminus \{t_1\}$}{
        $d_2 \leftarrow t_2^{\rm d}$\;
        \ForAll{$s' \in S$}{
            $e_2 \leftarrow d(t_2^{\rm e}, s')$\;
            \ForAll{$t_3 \in T(s') \setminus \{t_1, t_2\}$}{
                $d_3 \leftarrow t_3^{\rm d}$\;
                $e_3 \leftarrow d(t_3^{\rm e}, t_1^{\rm s})$\;
                \If{$d_1 + e_1 + d_2 + e_2 + d_3 + e_3 \leqq u$ and $\ell \leqq \frac{d_1 + d_2 + d_3}{d_1 + e_1 + d_2 + e_2 + d_3 + e_3}$}{
                    $C \leftarrow C \cup \{(t_1, t_2, t_3)\}$\;
                }
            }
        }
    }
}
\caption{A simple brute-force search}
\label{Alg:simple}
\end{algorithm2e}
%
\section{Pruning Algorithm} 
\label{Sec:pruning}
%

\subsection{Quadruple looping of brute-force search} 
For any transportation base $b \in B$, let $T(b)$ denote the set of all lanes 
starting from $b$. In addition, let $S$ be the set of all bases that can be the starting point for a lane. 
That is,
\begin{equation*}
S := \{ b \in B \,|\, T(b) \neq \emptyset \}.
\end{equation*}
If we partition $T$ into $T = \cup_{s \in S} T(s)$ in advance, then 
the for loop traversing $T$ can be replaced by a double for loop 
(outer for loop traversing $S$ and inner for loop traversing $T(s)$).
Thus, Algorithm~\ref{Alg:simple}, which is written as a double for loop, 
can be rewritten as a quadruple for loop, as shown in Algorithm~\ref{Alg:quadruple}.
We consider effectively narrowing down the search range for each of the four for loops 
without sacrificing accuracy.

\begin{algorithm2e}[H]
\SetKwData{Input}{input}
\KwData{a set of lanes $T$ on a metric space $(B, d)$, a lane $t_1$, a desired occupied vehicle rate $\ell \in [0, 1]$, and an upper limit of mileage $u > 0$}
\KwResult{the set of all $(\ell, u)$-feasible triangular transports containing $t_1$}

$C \leftarrow \emptyset$\;
$d_1 \leftarrow t_1^{\rm d}$\;
\ForAll{$s \in S$}{
    $e_1 \leftarrow d(t_1^{\rm e}, s)$\;
    \ForAll{$t_2 \in T(s) \setminus \{t_1\}$}{
        $d_2 \leftarrow t_2^{\rm d}$\;
        \ForAll{$s' \in S$}{
            $e_2 \leftarrow d(t_2^{\rm e}, s')$\;
            \ForAll{$t_3 \in T(s') \setminus \{t_1, t_2\}$}{
                $d_3 \leftarrow t_3^{\rm d}$\;
                $e_3 \leftarrow d(t_3^{\rm e}, t_1^{\rm s})$\;
                \If{$d_1 + e_1 + d_2 + e_2 + d_3 + e_3 \leqq u$ and $\ell \leqq \frac{d_1 + d_2 + d_3}{d_1 + e_1 + d_2 + e_2 + d_3 + e_3}$}{
                    $C \leftarrow C \cup \{(t_1, t_2, t_3)\}$\;
                }
            }
        }
    }
}
\caption{Quadruple looping of brute-force search}
\label{Alg:quadruple}
\end{algorithm2e}

\subsection{Dynamic refinement of search range} 
When pruning the first for loop in Algorithm~\ref{Alg:quadruple}, while we can determine the value of $d_1$, 
we cannot determine the values of 
$d_2, e_2, d_3, e_3$,  considering these values can be changed using the inner for loops.
In this situation, we introduce the following lemma for efficient pruning: 

\begin{lem}
A necessary condition for a triangular transport to be $(\ell, u)$-feasible is
\begin{equation*}
e_1 \leqq \min \{u(1 - \ell), u - d_1\}.
\end{equation*}
\label{Lem:pruning1}
\end{lem}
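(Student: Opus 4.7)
The plan is to derive the two bounds $e_1 \leq u - d_1$ and $e_1 \leq u(1-\ell)$ separately, each by a one-line manipulation of one of the two defining constraints, and then combine them via the minimum.

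First I would tackle the bound $e_1 \leq u - d_1$. This should follow directly from constraint (C2): since $d_2, e_2, d_3, e_3$ are all distances and hence nonnegative, dropping them from the left-hand side of $d_1 + e_1 + d_2 + e_2 + d_3 + e_3 \leq u$ preserves the inequality and yields $d_1 + e_1 \leq u$, from which the claim is immediate.

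Next I would handle the bound $e_1 \leq u(1 - \ell)$. I would start from (C1) and rewrite it in the equivalent form
\begin{equation*}
e_1 + e_2 + e_3 \leq (1 - \ell)(d_1 + e_1 + d_2 + e_2 + d_3 + e_3),
\end{equation*}
which is obtained by subtracting $\ell$ times the denominator from both sides of (C1) (noting that $d_1+d_2+d_3 = (\text{total}) - (e_1+e_2+e_3)$). Then, applying (C2) on the right bounds this by $(1-\ell)u$, and discarding the nonnegative terms $e_2, e_3$ on the left yields $e_1 \leq (1-\ell)u$.

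Taking the minimum of the two bounds gives the lemma. There is no real obstacle here: the only subtlety is remembering that rearranging (C1) is legitimate precisely because the denominator is a sum of nonnegative quantities, so no sign flip occurs, and that both relaxations (dropping nonnegative terms, substituting the upper bound $u$) are valid in the direction we need.
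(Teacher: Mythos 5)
Your proposal is correct and follows essentially the same route as the paper: the bound $e_1 \leqq u - d_1$ comes from dropping nonnegative terms in (C2), and the bound $e_1 \leqq u(1-\ell)$ comes from the chain $\ell \leqq 1 - \frac{e_1+e_2+e_3}{\textrm{total}} \leqq 1 - \frac{e_1}{u}$, which is just the cleared-denominator form of your rearrangement. The only cosmetic difference is that the paper keeps the argument in ratio form rather than multiplying through by the denominator.
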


\begin{proof}
We can easily find a lower bound of the total mileage as follows:
\begin{equation*}
d_1 + e_1 \leqq d_1 + e_1 + d_2 + e_2 + d_3 + e_3 \leqq u.
\end{equation*}
Therefore, we have
\begin{equation}
e_1 \leqq u - d_1.
\label{1a}
\end{equation}
On the other hand, we can find an upper bound of the occupied vehicle rate as follows:
\begin{equation*}
\ell \leqq \frac{d_1 + d_2 + d_3}{d_1 + e_1 + d_2 + e_2 + d_3 + e_3} \leqq 1 - \frac{e_1}{u}.
\end{equation*}
Thus, by solving $\ell \leqq 1 - \frac{e_1}{u}$ with respect to $e_1$, we obtain
\begin{equation}
e_1 \leqq u(1 - \ell).
\label{1b}
\end{equation}
The result follows from (\ref{1a}) and (\ref{1b}).
\end{proof}

Similarly, when pruning the second for loop in Algorithm~\ref{Alg:quadruple}, 
we can determine the values of $d_1, e_1$, but not  
$e_2, d_3, e_3$.  Therefore, we introduce the following lemma for efficient pruning.

\begin{lem}
A necessary condition for a triangular transport to be $(\ell, u)$-feasible is
\begin{equation*}
d_2 \geqq \left\{
\begin{array}{cl}
\frac{2\ell - 1}{2(1 - \ell)}e_1 - d_1 & (\ell \neq 1) \\
0 & (\ell = 1),
\end{array}
\right.
\quad
d_2 \leqq u - (d_1 + e_1).
\end{equation*}
\label{Lem:pruning2}
\end{lem}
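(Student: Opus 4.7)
The plan is to handle the two inequalities separately. The upper bound $d_2 \leq u - (d_1 + e_1)$ is immediate from constraint (C2) together with the non-negativity of $e_2$, $d_3$, and $e_3$, so the real work lies in the lower bound.

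For the lower bound, the idea is to derive an upper bound on $d_3$ in terms of the quantities already known at this pruning step (namely $d_1$, $e_1$, $d_2$) plus the still-unfixed empties $e_2$ and $e_3$. Two applications of the triangle inequality should suffice: first, routing through $t_2^{\rm e}$ and $t_1^{\rm s}$ gives $d_3 = d(t_3^{\rm s}, t_3^{\rm e}) \leq e_2 + d(t_2^{\rm e}, t_1^{\rm s}) + e_3$; second, routing through $t_2^{\rm s}$ and $t_1^{\rm e}$ gives $d(t_2^{\rm e}, t_1^{\rm s}) \leq d_2 + e_1 + d_1$. Combining these yields $d_3 \leq d_1 + e_1 + d_2 + e_2 + e_3$, which after adding $d_1 + d_2$ to both sides rearranges to $D \leq 2(d_1 + d_2) + E$, where $D := d_1 + d_2 + d_3$ and $E := e_1 + e_2 + e_3$.

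Next, I would rewrite (C1) in the equivalent form $\ell E \leq (1-\ell) D$ and substitute the bound above to obtain $(2\ell - 1) E \leq 2(1-\ell)(d_1 + d_2)$. Since $e_2, e_3 \geq 0$ imply $E \geq e_1$, for $\ell \geq 1/2$ this yields $(2\ell - 1) e_1 \leq 2(1-\ell)(d_1 + d_2)$, and dividing by $2(1-\ell) > 0$ (using $\ell < 1$) delivers the claimed lower bound. For $\ell < 1/2$ the right-hand side of the stated lower bound is non-positive, so the inequality is trivial. The boundary case $\ell = 1$ is handled separately: (C1) then forces $E = 0$, so in particular $e_1 = 0$, and $d_2 \geq 0$ is the only assertion available.

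The main obstacle will be identifying the correct triangle-inequality chain bounding $d_3$ from above: a naive bound such as $d_3 \leq u$ is far too loose to extract a useful lower bound on $d_2$. Once one spots that routing through the pair $t_2^{\rm e}, t_1^{\rm s}$ produces exactly the combination $2(d_1 + d_2) + E$ on the right-hand side, the remaining manipulation of (C1) is routine algebra.
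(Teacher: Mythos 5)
Your proof is correct and follows essentially the same route as the paper: the decisive step in both is the triangle-inequality chain through $t_2^{\rm s}$ and $t_1^{\rm e}$ (which, as the paper notes, also uses symmetry of the metric) to bound $d_3$ by $d_1 + e_1 + d_2$ plus empty-trip terms, after which constraint (C1) reduces to the identical inequality $(2\ell-1)e_1 \leqq 2(1-\ell)(d_1+d_2)$. The only cosmetic difference is that the paper first zeroes out $e_2$ and $e_3$ by monotonicity and then bounds $d_3$, whereas you carry them along and discard them at the end via $E \geqq e_1$; your explicit treatment of the cases $\ell < 1/2$ and $\ell = 1$ is a small bonus in rigor.
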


\begin{proof}
We can easily find a lower bound of the total mileage as follows:
\begin{equation*}
d_1 + e_1 + d_2 \leqq d_1 + e_1 + d_2 + e_2 + d_3 + e_3 \leqq u.
\end{equation*}
Therefore, we have
\begin{equation}
d_2 \leqq u - (d_1 + e_1).
\label{2a}
\end{equation}
On the other hand, we can find an upper bound of the occupied vehicle rate as follows:
\begin{align*}
\ell &\leqq \frac{d_1 + d_2 + d_3}{d_1 + e_1 + d_2 + e_2 + d_3 + e_3}  && \textrm{(monotonically decreasing w.r.t. $e_2, e_3$)} \\
&\leqq \frac{d_1 + d_2 + d_3}{d_1 + e_1 + d_2 + d_3} && \\
&= 1 - \frac{e_1}{d_1 + e_1 + d_2 + d_3} &&\textrm{(monotonically increasing w.r.t. $d_3$)}  \\
&\leqq 1 - \frac{e_1}{d_1 + e_1 + d_2 + (d_1 + e_1 + d_2)} && \\
&= 1 - \frac{e_1}{2(d_1 + e_1 + d_2)}. &&
\end{align*}
In the above, to obtain the second inequality, we replace $e_2$ and $e_3$ with zero, which is a lower bound of 
$e_2$ and $e_3$. Similarly, to obtain the third inequality, we replace $d_3$ with $(d_1 + e_1 + d_2)$, 
which is an upper bound of $d_3$ under the constraints $e_2 = e_3 = 0$. 
This upper bound is derived from the metric axioms (Figure~\ref{fig:inequality}). 
By solving the final inequality $\ell \leqq 1 - \frac{e_1}{2(d_1 + e_1 + d_2)}$ with respect to $d_2$ when $\ell \neq 1$,  we obtain
\begin{equation}
\frac{2\ell - 1}{2(1 - \ell)}e_1 - d_1 \leqq d_2 \quad (\ell \neq 1).
\label{2b}
\end{equation}
The result follows from (\ref{2a}) and (\ref{2b}).
\end{proof}

\begin{rem}
In Lemma~\ref{Lem:pruning2}, we use not only the triangle inequality but also the symmetry of the distance to obtain the upper bound of $d_3$.
\end{rem}

\begin{figure}[H]
\centering
\includegraphics[scale=0.11]{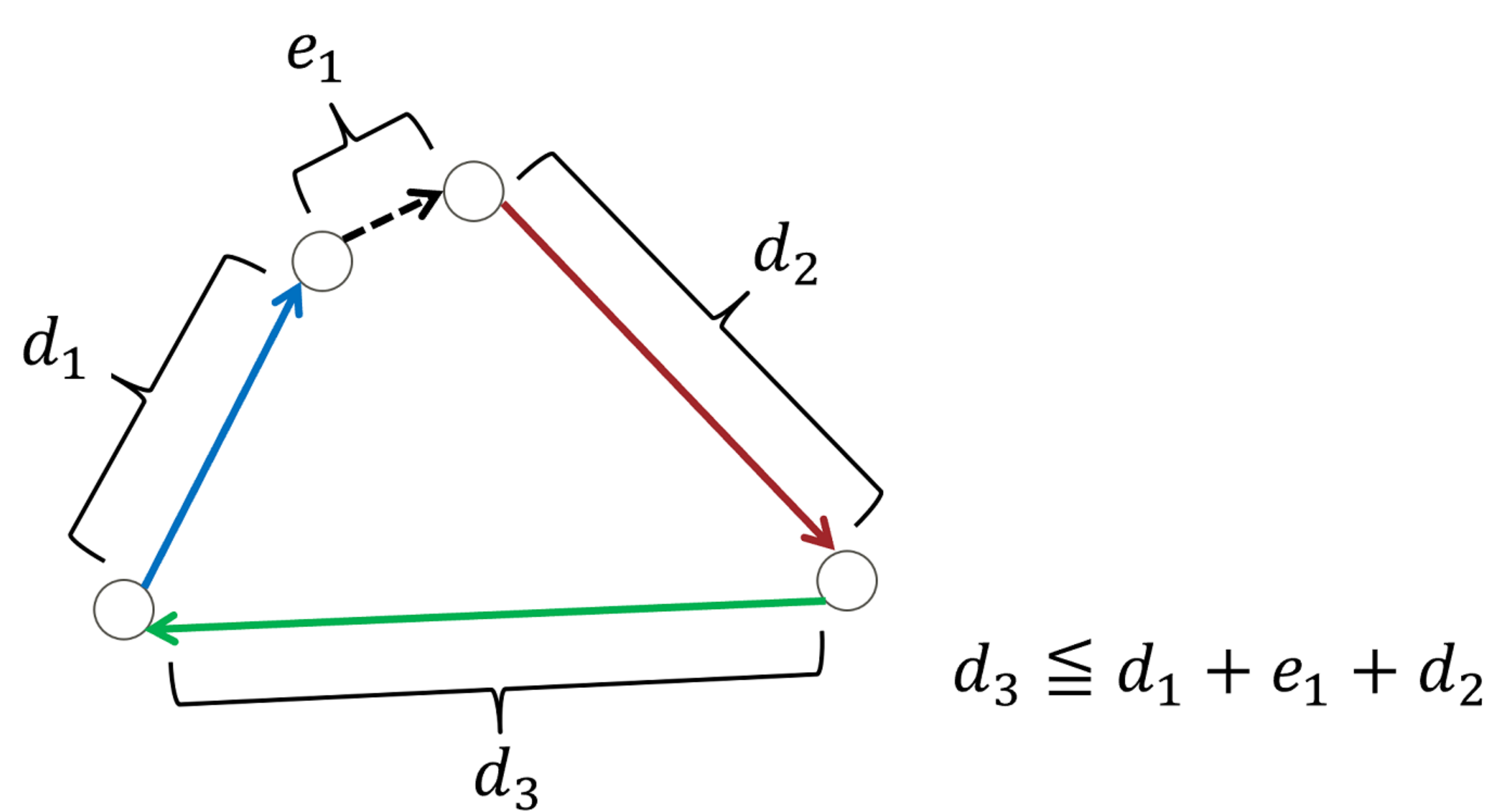} 
\caption{An upper bound of $d_3$ derived from the distance axiom under $e_2 = e_3 = 0$.}
\label{fig:inequality}
\end{figure}

When pruning the third for loop in Algorithm~\ref{Alg:quadruple}, 
we can determine the values of $d_1, e_1, d_2$, but not  
$d_3, e_3$. Thus, we introduce the following lemma for efficient pruning.

\begin{lem}
A necessary condition for a triangular transport to be $(\ell, u)$-feasible is
\begin{equation*}
e_2 \leqq \min \{u(1 - \ell) - e_1, u - (d_1+ e_1 + d_2)\}.
\end{equation*}
\label{Lem:pruning3}
\end{lem}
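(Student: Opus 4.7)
The plan is to follow the two-sided template established by Lemmas~\ref{Lem:pruning1} and~\ref{Lem:pruning2}: extract one bound on $e_2$ from constraint (C2) on the total mileage and a second bound from constraint (C1) on the occupied vehicle rate, then take their minimum.

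The mileage bound is immediate. Since $d_3, e_3 \geq 0$, the partial sum satisfies $d_1 + e_1 + d_2 + e_2 \leq d_1 + e_1 + d_2 + e_2 + d_3 + e_3 \leq u$, which rearranges to $e_2 \leq u - (d_1 + e_1 + d_2)$.

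For the rate bound, I would rewrite the occupied vehicle rate in complementary form,
\begin{equation*}
\frac{d_1+d_2+d_3}{d_1+e_1+d_2+e_2+d_3+e_3} = 1 - \frac{e_1+e_2+e_3}{d_1+e_1+d_2+e_2+d_3+e_3},
\end{equation*}
and apply two monotonic relaxations in succession: first replace the denominator by its upper bound $u$ (which enlarges the whole expression, since the subtracted fraction shrinks), and then drop the nonnegative term $e_3$ from the numerator (which enlarges it further). Chaining these with $\ell \leq (d_1+d_2+d_3)/(d_1+e_1+d_2+e_2+d_3+e_3)$ yields $\ell \leq 1 - (e_1+e_2)/u$, and solving for $e_2$ gives $e_2 \leq u(1-\ell) - e_1$. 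Taking the minimum of the two bounds produces the claim.

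The only subtle point, and the one place where care is required, is tracking the direction of each inequality in the rate step so that the two relaxations compose into a genuine upper bound on $\ell$ involving only the already-fixed quantities $d_1, e_1, d_2$. Notably, unlike Lemma~\ref{Lem:pruning2}, no appeal to the triangle inequality is needed here: the remaining unknown $e_3$ can be eliminated purely by its nonnegativity, which makes this lemma structurally the simplest of the three pruning bounds despite involving the most fixed inputs.
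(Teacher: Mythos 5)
Your proposal is correct and follows essentially the same route as the paper's proof: the mileage bound comes from dropping the nonnegative terms $d_3, e_3$ from the total, and the rate bound comes from the chain $\ell \leq 1 - (e_1+e_2)/u$, which the paper states directly and you justify by the same two monotonic relaxations (denominator $\leq u$, then $e_3 \geq 0$). The only difference is that you spell out the intermediate steps the paper leaves implicit.
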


\begin{proof}
We can easily find a lower bound of the total mileage as follows:
\begin{equation*}
d_1 + e_1 + d_2 + e_2 \leqq d_1 + e_1 + d_2 + e_2 + d_3 + e_3 \leqq u.
\end{equation*}
Therefore, we have
\begin{equation}
e_2 \leqq u - (d_1 + e_1 + d_2).
\label{3a}
\end{equation}
On the other hand, we can find an upper bound of the occupied vehicle rate as follows:
\begin{equation*}
\ell \leqq \frac{d_1 + d_2 + d_3}{d_1 + e_1 + d_2 + e_2 + d_3 + e_3} \leqq 1 - \frac{e_1 + e_2}{u}.
\end{equation*}
Thus, by solving $\ell \leqq 1 - \frac{e_1 + e_2}{u}$ with respect to $e_2$, we obtain
\begin{equation}
e_2 \leqq u(1 - \ell) - e_1.
\label{3b}
\end{equation}
The result follows from (\ref{3a}) and (\ref{3b}).
\end{proof}

Finally, when pruning the fourth for loop in Algorithm~\ref{Alg:quadruple}, 
we can determine the values of $d_1, e_1, d_2, e_2$, but not the value of 
$e_3$.  Therefore, we introduce the following lemma for efficient pruning.

\begin{lem}
A necessary condition for a triangular transport to be $(\ell, u)$-feasible is
\begin{equation*}
d_3 \geqq 
\left\{
\begin{array}{cl}
\frac{\ell}{1 - \ell}(e_1 + e_2) - (d_1 + d_2) & (\ell \neq 1) \\
0 & (\ell = 1), 
\end{array}
\right.
\quad d_3 \leqq u - (d_1 + e_1 + d_2 + e_2).
\end{equation*}
\label{Lem:pruning4}
\end{lem}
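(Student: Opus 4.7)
I would follow the two-pronged strategy used in Lemmas~\ref{Lem:pruning1} and~\ref{Lem:pruning3}: derive the upper bound on $d_3$ from the mileage constraint~(C2) alone, and derive the lower bound from the occupied vehicle rate constraint~(C1) after eliminating the one remaining unknown, namely $e_3$.

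The upper bound is immediate. Since $e_3 \geqq 0$, constraint~(C2) yields $d_1 + e_1 + d_2 + e_2 + d_3 \leqq u$, which rearranges to $d_3 \leqq u - (d_1 + e_1 + d_2 + e_2)$.

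For the lower bound, I would exploit the fact that the occupied vehicle rate, viewed as a function of $e_3$, is monotonically decreasing. Replacing $e_3$ by its minimum value $0$ gives the upper estimate
\begin{equation*}
\frac{d_1 + d_2 + d_3}{d_1 + e_1 + d_2 + e_2 + d_3 + e_3} \leqq \frac{d_1 + d_2 + d_3}{d_1 + e_1 + d_2 + e_2 + d_3} = 1 - \frac{e_1 + e_2}{d_1 + e_1 + d_2 + e_2 + d_3},
\end{equation*}
and combining this with~(C1) yields $\ell \leqq 1 - \frac{e_1 + e_2}{d_1 + e_1 + d_2 + e_2 + d_3}$. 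When $\ell \neq 1$, I would solve this inequality for $d_3$, obtaining $d_3 \geqq \frac{e_1 + e_2}{1 - \ell} - (d_1 + e_1 + d_2 + e_2)$; a short rearrangement using $\frac{1}{1-\ell} - 1 = \frac{\ell}{1-\ell}$ recasts this in the stated form $\frac{\ell}{1 - \ell}(e_1 + e_2) - (d_1 + d_2)$. The case $\ell = 1$ is vacuous, since $d_3 \geqq 0$ holds unconditionally.

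The main point requiring care, though not really an obstacle, is the algebraic manipulation in the case $\ell \neq 1$. In contrast with Lemma~\ref{Lem:pruning2}, where both metric symmetry and the triangle inequality had to be invoked to bound an unknown loaded-trip distance, here the sole undetermined quantity $e_3$ appears only in the denominator and in a monotone way, so neither metric axiom is needed and the argument remains purely elementary.
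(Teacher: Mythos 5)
Your proposal is correct and follows essentially the same route as the paper's own proof: the upper bound from the mileage constraint with $e_3 \geqq 0$, and the lower bound by using monotonicity in $e_3$ to replace it with zero, rewriting the rate as $1 - \frac{e_1+e_2}{d_1+e_1+d_2+e_2+d_3}$, and solving for $d_3$ when $\ell \neq 1$. Your closing observation that, unlike Lemma~\ref{Lem:pruning2}, no metric axiom is needed here is also accurate.
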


\begin{proof}
We can easily find a lower bound of the total mileage as follows:
\begin{equation*}
d_1 + e_1 + d_2 + e_2 + d_3 \leqq d_1 + e_1 + d_2 + e_2 + d_3 + e_3 \leqq u.
\end{equation*}
Therefore, we have
\begin{equation}
d_3 \leqq u - (d_1 + e_1 + d_2 + e_2).
\label{4a}
\end{equation}
On the other hand, we can find an upper bound of the occupied vehicle rate as follows:
\begin{align*}
\ell &\leqq \frac{d_1 + d_2 + d_3}{d_1 + e_1 + d_2 + e_2 + d_3 + e_3}  \qquad \textrm{(monotonically decreasing w.r.t. $e_3$)} \\
&\leqq \frac{d_1 + d_2 + d_3}{d_1 + e_1 + d_2 + e_2 + d_3} \\
&= 1 - \frac{e_1 + e_2}{d_1 + e_1 + d_2 + e_2 + d_3}.
\end{align*}
In the above, to get the second inequality, we replace $e_3$ with zero, which is a lower bound of 
$e_3$. 
By solving the final inequality $\ell \leqq 1 - \frac{e_1 + e_2}{d_1 + e_1 + d_2 + e_2 + d_3}$ with respect to $d_3$, we obtain
\begin{equation}
\frac{\ell}{1 - \ell}(e_1 + e_2) - (d_1 + d_2) \leqq d_3 \quad (\ell \neq 1).
\label{4b}
\end{equation}
The result follows from (\ref{4a}) and (\ref{4b}).
\end{proof}

The proposed algorithm that incorporates the above discussion is presented in Algorithm~\ref{Alg:pruning}.
In order to efficiently scan the first and third for-loops in Algorithm~\ref{Alg:pruning}, 
for every $b \in B$, we can sort all transportation bases in $S$ according to their distances from $b$ and create a sorted list $S_b$ in advance.
Since the sort is repeated $|B|$ times, this operation takes $O(|B|^2 \log |B|)$ time. 
However, owing to the nature of the problem, the number of transportation bases $|B|$ is small compared with the number of transports $|T|$. 
Therefore, this preprocessing can be performed in a relatively short time. 
In addition, in order to efficiently scan the second and fourth for loops, for every $s \in S$, 
we sort all the lanes in $T(s)$ according to their distances of that lane in advance.

\begin{algorithm2e}[p]
\SetKwData{Input}{input}
\KwData{a set of lanes $T$ on a metric space $(B, d)$, a lane $t_1$, a desired occupied vehicle rate $\ell \in [0, 1]$, and an upper limit of mileage $u > 0$}
\KwResult{the set of all $(\ell, u)$-feasible triangular transports containing $t_1$}

$C \leftarrow \emptyset$\;
$d_1 \leftarrow t_1^{\rm d}$\;
\ForAll{$s \in S$ such that $d(t_1^{\rm e}, s) \leqq \min \{u(1 - \ell), u - d_1\}$}{
    $e_1 \leftarrow d(t_1^{\rm e}, s)$\;
    \If{$u < d_1 + e_1 + d(s, t_1^{\rm s})$}{
        continue\;
    }
    \ForAll{$t_2 \in T(s) \setminus \{t_1\}$ such that $t_2^{\rm d} \geqq \frac{2\ell - 1}{2(1 - \ell)}e_1 - d_1 \; (\textrm{if $\ell \neq 1$}), \;\, t_2^{\rm d} \leqq u - (d_1 + e_1)$}{
        $d_2 \leftarrow t_2^{\rm d}$\;
        \If{$u < d_1 + e_1 + d_2 + d(t_2^{\rm e}, t_1^{\rm s})$}{
            continue\;
        }
        \ForAll{$s' \in S$ such that $d(t_2^{\rm e}, s') \leqq \min \{u(1 - \ell) - e_1, u - (d_1 + e_1 + d_2)\}$}{
            $e_2 \leftarrow d(t_2^{\rm e}, s')$\;
            \If{$u < d_1 + e_1 + d_2 + e_2 + d(s', t_1^{\rm s})$}{
                continue\;
            }
            \ForAll{$t_3 \in T(s') \setminus \{t_1, t_2\}$ such that $t_3^{\rm d} \geqq \frac{\ell}{1 - \ell}(e_1 + e_2) - (d_1 + d_2) \; (\textrm{if $\ell \neq 1$}), \;\, t_3^{\rm d} \leqq u - (d_1 + e_1 + d_2 + e_2)$}{
                $d_3 \leftarrow t_3^{\rm d}$\;
                $e_3 \leftarrow d(t_3^{\rm e}, t_1^{\rm s})$\;
                \If{$d_1 + e_1 + d_2 + e_2 + d_3 + e_3 \leqq u$ and $\ell \leqq \frac{d_1 + d_2 + d_3}{d_1 + e_1 + d_2 + e_2 + d_3 + e_3}$}{
                    $C \leftarrow C \cup \{(t_1, t_2, t_3)\}$\;
                }
            }
        }
    }
}
\caption{A brute-force search with pruning}
\label{Alg:pruning}
\end{algorithm2e}

\begin{thm}
Algorithm~\ref{Alg:pruning} correctly outputs the set of all $(\ell, u)$-feasible triangular transports containing $t_1$. 
\end{thm}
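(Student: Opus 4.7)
The plan is to split the correctness claim into soundness (every triple added to $C$ is $(\ell,u)$-feasible) and completeness (no $(\ell,u)$-feasible triangular transport containing $t_1$ is missed), and to handle them in that order.

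Soundness is essentially free: the innermost \textbf{if} in Algorithm~\ref{Alg:pruning} tests the two defining inequalities of $(\ell,u)$-feasibility directly before inserting $(t_1,t_2,t_3)$ into $C$, so one only needs to note this and move on.

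The substantive direction is completeness. I would fix an arbitrary $(\ell,u)$-feasible triangular transport $(t_1,t_2,t_3)$, set $s := t_2^{\rm s}$ and $s' := t_3^{\rm s}$, and verify that every pruning test the algorithm performs along the intended iteration path $(s, t_2, s', t_3)$ is passed. The four \textbf{for}-loop guards coincide literally with the necessary conditions established in Lemmas~\ref{Lem:pruning1}--\ref{Lem:pruning4}, one lemma per loop, so these four cases reduce to direct citations of those lemmas applied to the quantities $d_1, e_1, d_2, e_2, d_3, e_3$ read off from $(t_1,t_2,t_3)$.

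The only tests not covered by Lemmas~\ref{Lem:pruning1}--\ref{Lem:pruning4} are the three intermediate \textbf{continue} checks, and these are where I would do a short triangle-inequality calculation. The first such guard, after $s$ is fixed, rejects the iterate when $d_1 + e_1 + d(s, t_1^{\rm s}) > u$; but from the closed tour $s \to t_2^{\rm e} \to s' \to t_3^{\rm e} \to t_1^{\rm s}$ the triangle inequality gives $d_2 + e_2 + d_3 + e_3 \geqq d(s, t_1^{\rm s})$, so constraint (C2) rules out this strict inequality on a feasible triple. The analogous bounds $e_2 + d_3 + e_3 \geqq d(t_2^{\rm e}, t_1^{\rm s})$ and $d_3 + e_3 \geqq d(s', t_1^{\rm s})$ dispatch the other two \textbf{continue} checks in identical fashion. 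I do not anticipate a serious obstacle; the only thing to be careful about is tracking which quantities are already determined in each nested scope so that each triangle-inequality estimate is legitimately applied to a fixed partial tour. Combining soundness with this completeness argument then yields the theorem.
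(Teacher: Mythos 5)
Your proof is correct and takes essentially the same route as the paper, whose entire proof is the single line ``the result follows from Lemmas~\ref{Lem:pruning1} to \ref{Lem:pruning4}'': soundness from the final \textbf{if} test, completeness of the four loop guards from the four lemmas. If anything, your version is more complete, since the three intermediate \textbf{continue} checks are genuine pruning steps not covered by those lemmas, and your triangle-inequality estimates ($d(s, t_1^{\rm s}) \leqq d_2 + e_2 + d_3 + e_3$ and its two analogues, combined with (C2)) supply exactly the justification that the paper leaves implicit.
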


\begin{proof}
The result follows from Lemmas~\ref{Lem:pruning1} to \ref{Lem:pruning4}.
\end{proof}

In Algorithm~\ref{Alg:pruning}, when $\ell = 1$, 
enumeration is the easiest because the subsequent departure locations searched in the first and third for-loops are limited to those with zero empty trip distances. 
In addition, when $\ell \neq 1$, the closer the value of $\ell$ is to $1$, 
the smaller is the search interval limited by \lq\lq such that" clause for each of the four for loops. 
Hence, we have the following remark:

\begin{rem}
The larger the desired occupied vehicle rate $\ell$ is, 
the shorter the Algorithm~\ref{Alg:pruning} runs.
\label{Rem:shorter}
\end{rem}
\newpage
\subsection{Faster algorithm for the $k$-best solutions} 
In practice, it is sufficient to present a list of $k$ options with the highest occupied vehicle rate for a suitable $k$, considering the matching client cannot consider them deeply if the system presents too many options. 
More precisely, we enumerate the $k$ triangular transports in the order of the highest occupied vehicle rate among the $(\ell, u)$-feasible triangular transports containing $t_1$. 
Therefore, when searching for the $(\ell, u)$-feasible triangular transports containing $t_1$, 
we always keep track of the occupied vehicle rate, which is the provisional $k$th rank. 
Until the number of $(\ell, u)$-feasible triangular transports reaches $k$, 
the desired occupied vehicle rate $\ell$ is used as it is. 
Thereafter, each time the provisional $k$th position is updated, 
the value of $\ell$ used in the algorithm is increased to the value of the occupied vehicle rate 
in the provisional $k$th position. 
Then, triangular transport with an occupied vehicle rate up to the top $k$th will always be covered, and the modified algorithm will work in less time than continuing to use the original value of $\ell$. 
This management can be performed using a priority queue (Here we use a binary heap~\cite{heap}). 
The detailed procedure is presented as Algorithm~\ref{Alg:dynamic}.

\begin{algorithm2e}[h]
\SetKwData{Input}{input}
\KwData{a set of lanes $T$ on a metric space $(B, d)$, a lane $t_1$, a desired occupied vehicle rate $\ell \in [0, 1]$, and an upper limit of mileage $u > 0$}
\KwResult{the set of all $(\ell, u)$-feasible triangular transports containing $t_1$}
Declare a binary heap (min-heap) $H$\;
$d_1 \leftarrow t_1^{\rm d}$\;
\ForAll{$s \in S$ such that $d(t_1^{\rm e}, s) \leqq \min \{u(1 - \ell), u - d_1\}$}{
    $e_1 \leftarrow d(t_1^{\rm e}, s)$\;
    \If{$u < d_1 + e_1 + d(s, t_1^{\rm s})$}{
        continue\;
    }
    \ForAll{$t_2 \in T(s) \setminus \{t_1\}$ such that $t_2^{\rm d} \geqq \frac{2\ell - 1}{2(1 - \ell)}e_1 - d_1 \; (\textrm{if $\ell \neq 1$}), \;\, t_2^{\rm d} \leqq u - (d_1 + e_1)$}{
        $d_2 \leftarrow t_2^{\rm d}$\;
        \If{$u < d_1 + e_1 + d_2 + d(t_2^{\rm e}, t_1^{\rm s})$}{
            continue\;
        }
        \ForAll{$s' \in S$ such that $d(t_2^{\rm e}, s') \leqq \min \{u(1 - \ell) - e_1, u - (d_1 + e_1 + d_2)\}$}{
            $e_2 \leftarrow d(t_2^{\rm e}, s')$\;
            \If{$u < d_1 + e_1 + d_2 + e_2 + d(s', t_1^{\rm s})$}{
                continue\;
            }
            \ForAll{$t_3 \in T(s') \setminus \{t_1, t_2\}$ such that $t_3^{\rm d} \geqq \frac{\ell}{1 - \ell}(e_1 + e_2) - (d_1 + d_2) \; (\textrm{if $\ell \neq 1$}), \;\, t_3^{\rm d} \leqq u - (d_1 + e_1 + d_2 + e_2)$}{
                $d_3 \leftarrow t_3^{\rm d}$\;
                $e_3 \leftarrow d(t_3^{\rm e}, t_1^{\rm s})$\;
                \If{$d_1 + e_1 + d_2 + e_2 + d_3 + e_3 \leqq u$ and $\ell \leqq \frac{d_1 + d_2 + d_3}{d_1 + e_1 + d_2 + e_2 + d_3 + e_3}$}{ 
                    \If{$|H| = k$}{
                        Remove the minimum element from $H$\;
                    }
                    \vspace{0mm} Add $(t_1, t_2, t_3)$ to $H$ with priority $\frac{d_1 + d_2 + d_3}{d_1 + e_1 + d_2 + e_2 + d_3 + e_3}$ \;
                    \If{$|H| = k$}{
                        $\ell \leftarrow$ Refer to the minimum key of $H$ without deleting it\;
                    }
                }
            }
        }
    }
}
\caption{A brute-force top-$k$ search with pruning}
\label{Alg:dynamic}
\end{algorithm2e}
\newpage
%
\section{Computational Experiments}
%
Using approximately 17,000 real, anonymized transport lane data ($|B| = 4828, \;\, |T| = 16957$) 
across Japan, 
we conducted experiments to enumerate efficient triangular transports. 
First we created 1000 problems (1000 matching requests) 
by randomly selecting 1000 lanes from $T$ and fixing each lane as the first lane. 
These problems were used to compare three algorithms. 
Algorithm~\ref{Alg:simple} performs a simple brute-force search,  
Algorithm~\ref{Alg:pruning} dynamically narrows the search range,  
and Algorithm~\ref{Alg:dynamic} specializes in enumerating the $k$ best solutions to Algorithm~\ref{Alg:pruning}. 
The desired occupied vehicle rate $\ell$ was varied from 0.75 to 0.95 in 0.05 increments. 
For the total mileage limit, we set $u = 4 t_1^{\rm d}$, 
that is, the total mileage can be up to four times the length of the first transport lane, which is fixed as a matching client. 
We implemented the three algorithms using Cython~\cite{cython, cython-guide} ($\neq$ Python) and executed them on a desktop PC
with an Intel\textregistered~Core\texttrademark~i9-9900K processor and 64GB memory installed. 
The results are presented in Tables~\ref{Tbl:result1} and \ref{Tbl:result2}.

\begin{table}[h]
\centering
\caption{Computational time to process 1000 matching requests (seconds)} 
\label{Tbl:result1}
\begin{tabular}{|c|r|r|r|} \hline
\multirow{2}{*}{Scenario \textbackslash Algorithm}  & Brute-force search & pruning & $k$ Best solutions \\
& Algorithm~\ref{Alg:simple}  & Algorithm~\ref{Alg:pruning}  & Algorithm~\ref{Alg:dynamic}  \\ \hline
$\ell = 0.75, \;\, u = 4t_1^{\rm d}$ & 103635.7 & 1525.7 & 29.3  \\ 
$\ell = 0.80, \;\, u = 4t_1^{\rm d}$ & (28h 47min) & 944.2 & 28.7  \\ 
$\ell = 0.85, \;\, u = 4t_1^{\rm d}$ & $\downarrow$  \quad & 512.3 & 27.0  \\ 
$\ell = 0.90, \;\, u = 4t_1^{\rm d}$ & $\downarrow$ \quad & 209.3 & 22.1  \\ 
$\ell = 0.95, \;\, u = 4t_1^{\rm d}$ & $\downarrow$ \quad & 45.0 & 13.1  \\ \hline
\end{tabular}
\end{table}

\begin{table}[h]
\centering
\caption{Average computational time to process one matching request (seconds)} 
\label{Tbl:result2}
\begin{tabular}{|c|r|r|r|} \hline
\multirow{2}{*}{Scenario \textbackslash Algorithm}  & Brute-force search & pruning & $k$ Best solutions \\
& Algorithm~\ref{Alg:simple}  & Algorithm~\ref{Alg:pruning}  & Algorithm~\ref{Alg:dynamic}  \\ \hline
$\ell = 0.75, \;\, u = 4t_1^{\rm d}$ & 103.6357 & 1.5257 & 0.0293  \\ 
$\ell = 0.80, \;\, u = 4t_1^{\rm d}$ & $\downarrow$ \quad & 0.9442 & 0.0287  \\ 
$\ell = 0.85, \;\, u = 4t_1^{\rm d}$ & $\downarrow$ \quad & 0.5123 & 0.0270  \\ 
$\ell = 0.90, \;\, u = 4t_1^{\rm d}$ & $\downarrow$ \quad & 0.2093 & 0.0221  \\ 
$\ell = 0.95, \;\, u = 4t_1^{\rm d}$ & $\downarrow$ \quad & 0.0450 & 0.0131  \\ \hline
\end{tabular}
\end{table}

Using the simple brute force, 
it took approximately 100 seconds to process a matching request.  
It is also clear that 1,000 matching requests cannot be processed in a single day. 
On the other hand, Algorithm~\ref{Alg:pruning} 
processed one matching request in a few seconds on average, 
and processed 1000 matching requests in less than 30 minutes 
(even in the most time-consuming case, $\ell = 0.75$). 
As mentioned in Remark~\ref{Rem:shorter}, we observed that as the value of the desired occupied vehicle rate $\ell$ increased, 
Algorithm~\ref{Alg:pruning} ran in a shorter time.
Algorithm~\ref{Alg:dynamic} 
instantly processed one matching request on average, and 1000 matching requests were processed within 30 seconds. 
Furthermore the computational time does not change significantly even when the desired occupied vehicle rate is low.
\par
We believe that Algorithm~\ref{Alg:pruning} is sufficiently practical to solve a range of valid scenarios. 
In reality, triangular transportation with an occupied vehicle rate of less than 80\% is unattractive.  
It is not meaningful to enumerate the range of such transports. 
However, the problem is setting the desired occupied vehicle rate when using 
Algorithm~\ref{Alg:pruning} in actual service. 
Although it is natural for users to set a value according to their tolerance, 
if there are insufficient triangular transports to achieve a set value, 
the user must change the setting to a lower value and start over. 
Therefore, an incentive exists to set a low value from the beginning. 
Consequently, the system is computationally overloaded. 
Conversely, even if the system sets the value rather than the user, it is not easy to determine the tradeoff between the number of triangular transports to be enumerated and the computational time in advance. 
After all, the form of listing the top-$k$ combinations from among those with the highest occupied vehicle rates is easy to understand for both the user and the system, 
which can be achieved using Algorithm~\ref{Alg:dynamic}. Even if the desired occupied vehicle rate is set to a low value, the computational load does not change significantly.
Therefore, Algorithm~\ref{Alg:dynamic} is also a means of solving the operational problems in Algorithm~\ref{Alg:pruning}.
\par
In the computational experiments, we recorded how far the value of $\ell$ used inside Algorithm~\ref{Alg:dynamic} was increased. 
Hereafter, we write $\ell^*$ as the value of $\ell$ at the completion of the algorithm operation. 
When the original desired occupied vehicle rate was set to $\ell = 0.75$,
 the average value of $\ell^*$ for the 1000 problems was 0.875. 
Nevertheless, the computational time required for Algorithm~\ref{Alg:dynamic} with $\ell = 0.75$ to process 1000 matching requests (29.3 seconds) was shorter than that of Algorithm~\ref{Alg:pruning} with $\ell = 0.95$ (45.0 seconds). 
Figure~\ref{fig:boxplot} shows a box-and-whisker plot of the distribution of the computation time for both. 
The Algorithm~\ref{Alg:pruning} with $\ell = 0.95$ could instantly process most of the 1000 matching requests, however, a small fraction of the matching requests worsened the overall computational time. 
We examined the characteristics of these matching requests and found that the first transport lane, 
which was fixed as the client, was long distance, and there were many other transportation bases near the end of the first lane, 
making it difficult to ensure effective pruning. 
To match requests with these characteristics, there are many triangular transport combinations with occupied vehicles rate close to one. 
Algorithm~\ref{Alg:dynamic} with $\ell = 0.75$ reduces the deterioration in computational time for such matching requests by increasing the value of $\ell$ used in the algorithm to a higher value. 
In other words, Algorithm~\ref{Alg:dynamic} is superior in terms of its strategy for instances that are inherently difficult to enumerate.

\begin{figure}[H]
\centering
\includegraphics[scale=0.16]{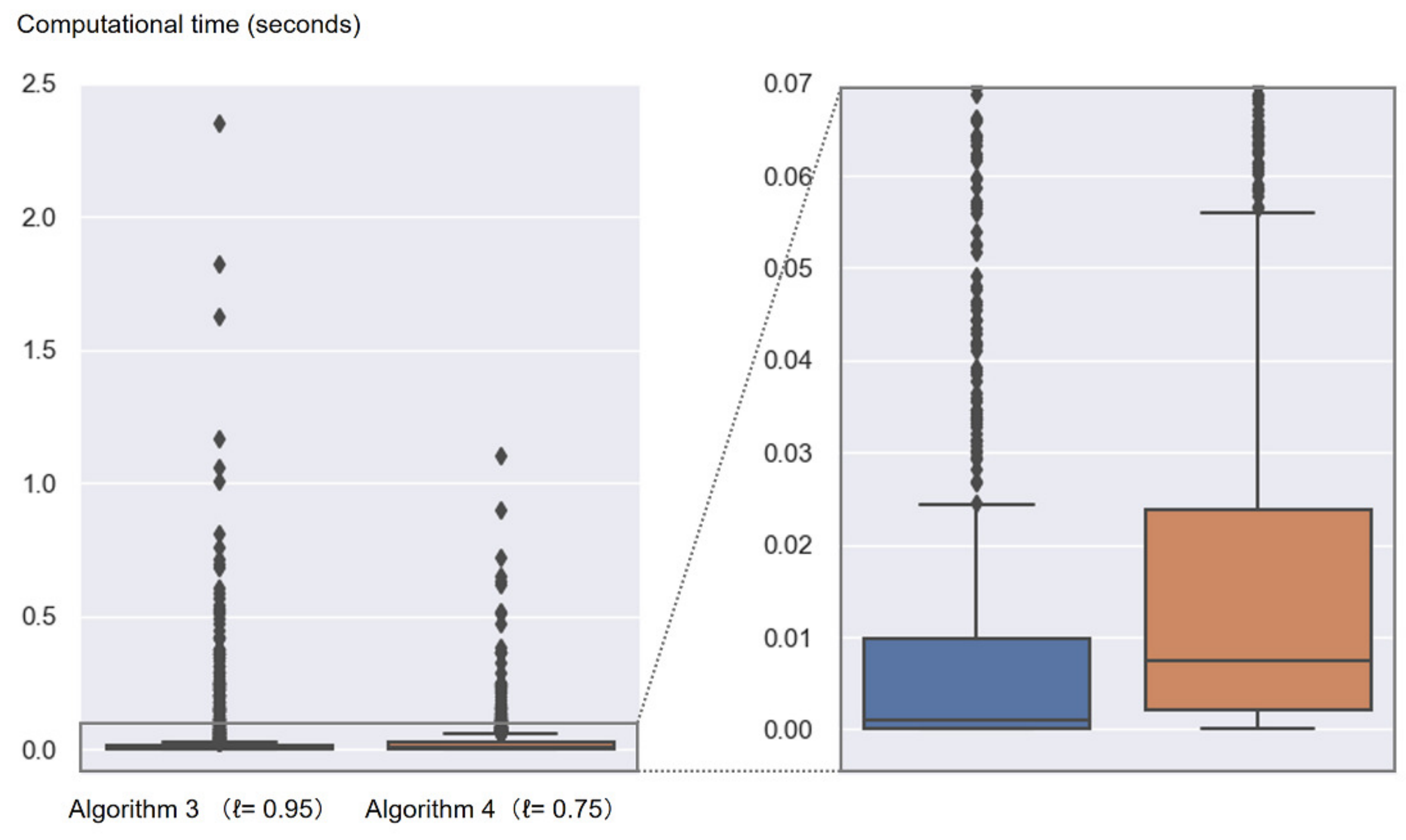}
\caption{Distribution of computational times for Algorithm~\ref{Alg:pruning} and Algorithm~\ref{Alg:dynamic}}
\label{fig:boxplot}
\end{figure}
%
\section{Social Implementation}
%
Our proposed algorithms were installed as core engines 
in the joint transportation matching system \lq\lq TranOpt" by JPR, 
and the service for general users launched on October 21, 2021~\cite{JPR2021} (see Figure~\ref{fig:tranopt}). 
This system enables joint transportation by creating a database of transportation routes for many companies and matching shipper companies across industries using vast amounts of logistics data. 
Users register their company's transportation routes, cargo, and other information in the system, along with their desired conditions for matching. 
The system presents multiple matching candidates to the user, taking into account the desired conditions. 
The user notifies other companies with whom they wishes to share transportation through the system, and joint transportation is executed in a mutually coordinated manner. 
As of October 2022, over 150 companies are using this system. 
We expect dramatic improvements in logistics as this technology will be used widely 
in the future.
\par
Prior to the launch of the service, we conducted a demonstration experiment until the end of August 2021 wherein 100 companies used the system as free-trial users and cooperated in interviews to improve the service. 
During this free trial period, the average occupied vehicle rate of the matching candidates proposed by the system to users was as high as 93\%, 
and users also voiced many expectations of the system. 

\begin{figure}[H]
\centering
\includegraphics[scale=0.12]{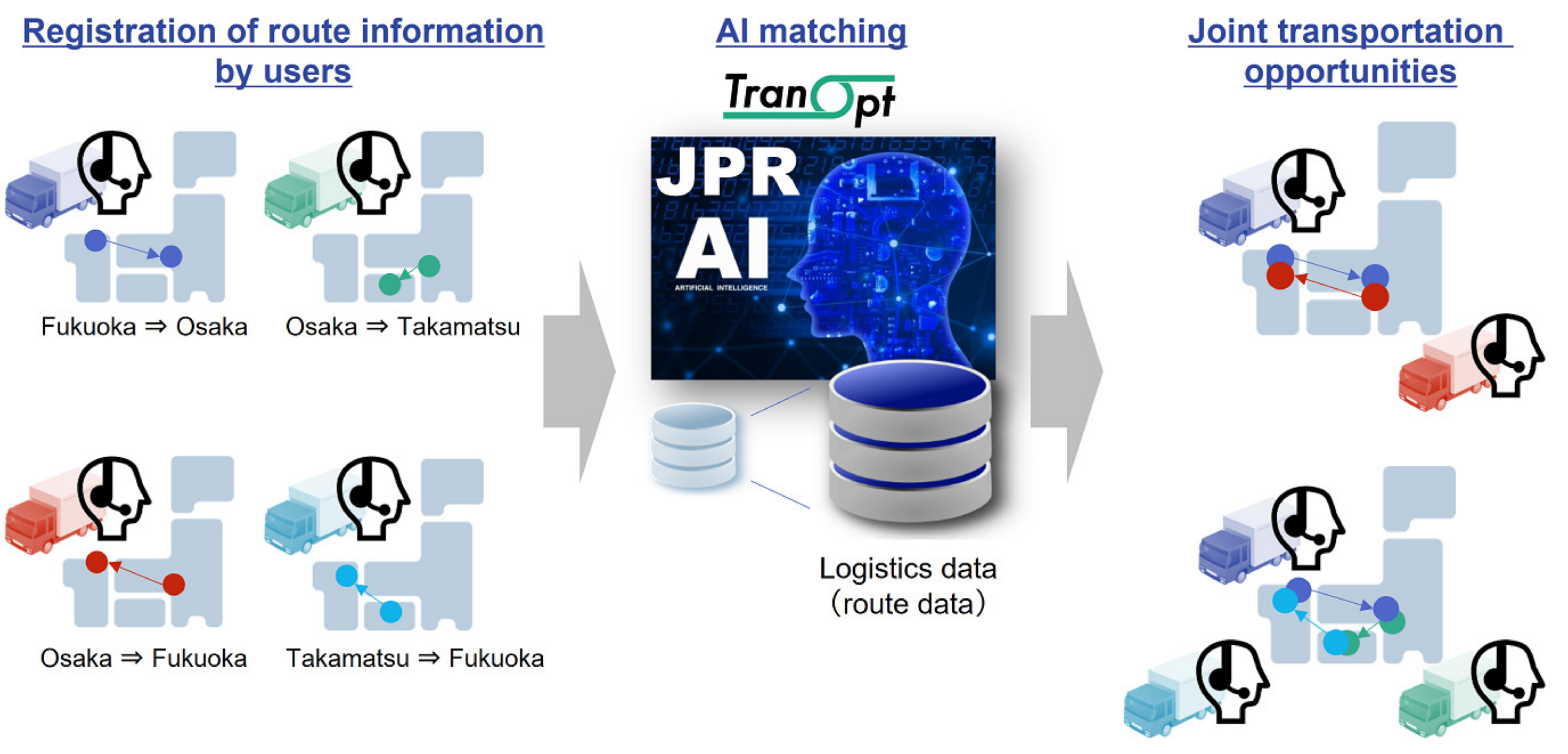}
\caption{Overview of joint transportation matching service}
\label{fig:tranopt}
\end{figure}

Additionally, although this study only deals with triangular transportation, wherein multiple shipments are processed sequentially, it also addresses the high-speed enumeration of mixed transportation, in which loads are mixed and transported simultaneously (see Figure~\ref{fig:mixed}). 
Triangular transportation is expected to improve the occupied vehicle rate, whereas mixed transportation is expected to improve the truck fill rate.

\begin{figure}[h]
\centering
\begin{tabular}{l}
$\longrightarrow$ : Loading Trip \\
$\dashrightarrow$\, : Empty Trip
\end{tabular}
\hspace{5mm}
$\textrm{Reduction Ratio} = \dfrac{\textrm{Length of} \longrightarrow \textrm{when cooperating}}{\textrm{Length of} \longrightarrow \textrm{when not cooperating}}$ \\[3mm]
\includegraphics[scale=0.14]{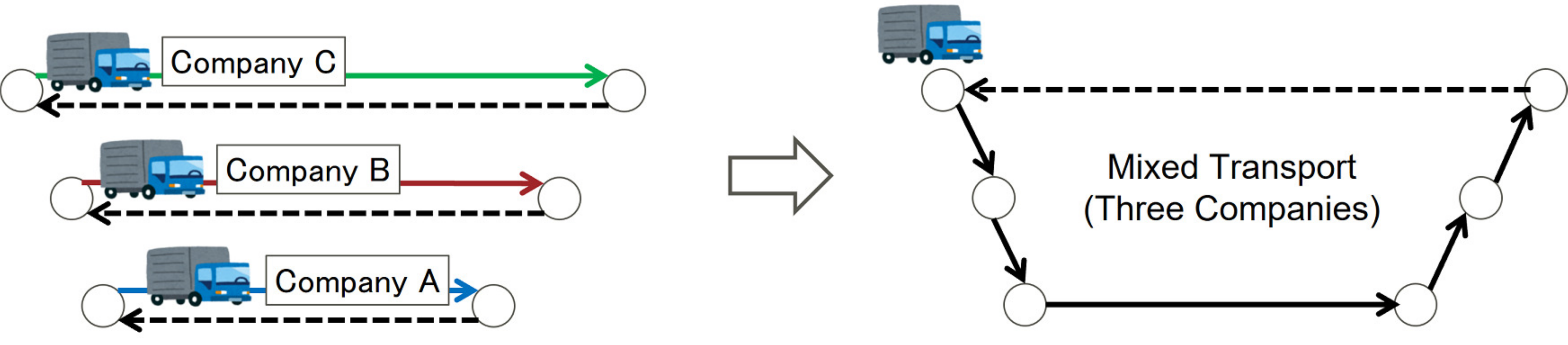}
\caption{Mixed transportation}
\label{fig:mixed}
\end{figure}

In recent years, the importance of the explainability of AI output results has been emphasized. Although this system uses advanced search logic, the output results can be easily explained in the form of \lq\lq all triangular transports with an occupied vehicle ratio of 95\% or above are enumerated" or \lq\lq all mixed transports with a reduction rate of 40\% or less are enumerated." 
In this study, we have just proved the validity of the refinement (i.e., it does not leak any combination that satisfies the condition) for triangular transports. 
Furthermore, we added a mechanism to calculate and display predicted transportation fares for triangular and mixed transportation to users and fair cost sharing among the three companies. Fair cost-sharing is calculated based on the well-known Shapley value~\cite{Shapley1953} in the cooperative game theory. Therefore, in addition to the technologies developed by the authors, the system utilizes the achievements of our predecessors in the OR field.
%
\section{Concluding Remarks}
%
In this study, we have focused on a form of joint transportation called triangular transportation 
and have proposed an algorithm for instantly enumerating combinations with high cooperation effects. 
We Used approximately 17,000 real anonymized transport lane data across Japan. 
We demonstrated that it is thousands of times faster than simple brute-force. 
Based on this enumeration technology, we developed a joint transportation 
matching system. As of October 2022, over 150 companies are using this system. 
We expect dramatic improvements in logistics as this technology will be used widely in the future.
%
\section*{Acknowledgments}
%
The authors would like to thank Prof. Naoyuki Kamiyama, Prof. Katsuki Fujisawa, and Prof. Hidefumi Kawasaki of Kyushu University for their support as advisors in advancing the research and development of the NEDO-funded project \lq\lq 
Cross-industry joint transportation matching service to realize white logistics'' 
(Japan Pallet Rental Corporation). 
\par
Akifumi Kira was supported in part by JSPS KAKENHI Grant Numbers 17K12644 and 
21K11766, Japan.


\end{document}